\title{Predicate Transformers, (co)Monads and Resolutions}
       \author{Pierre Hyvernat\inst{1,2}}
\institute{Institut math\'ematique de Luminy, Marseille, France\\
           \and
           Chalmers Institute of Technology, G\"oteborg, Sweden\\
           \email{hyvernat@iml.univ-mrs.fr}}
\newcommand\ie{\mbox{\textit{i.e.}}~}
\newcommand{\Id}{\mathbf{Id}}
\newcommand{\Fix}{\mathbf{Fix}}
\newcommand{\Pow}{\mathcal{P}}
\newcommand\CONV[1]{{#1}^{\sim}}
\newcommand\AngelU[1]{\langle#1\rangle}
\newcommand\DemonU[1]{[#1]}
\newcommand\BottomU[1]{\lfloor#1\rceil}
\newcommand\AND{\mathbin{\&}}
\newcommand\meets{\mbox{)\kern-4pt(}}
\newcommand\step[2]{\smallbreak\hskip.25cm$#1$\ifx\empty#2\else\hskip.25cm$\{$ {\small #2} $\}$\fi\smallbreak}
\newcommand\be{\[\begin{array}[t]{lllllll}} \newcommand\ee{\end{array}\]}
\begin{document}

    %
%
%

%
%
%

\count255\catcode`@
\catcode`@=11
\chardef\mathlig@atcode\count255

\let\ea=\expandafter

\def\actively#1#2{\begingroup\uccode`\~=`#2\relax\uppercase{\endgroup#1~}}

\def\mathlig@gobble#1{\mathlig@next@cmd}

\def\mathlig@delim{\mathlig@delim}

\def\mathlig@defcs#1{\ea\def\csname#1\endcsname}

\def\mathlig@let@cs#1#2{\ea\let\ea#1\csname#2\endcsname}

\def\mathlig@appendcs#1#2{\ea\edef\csname#1\endcsname{\csname#1\endcsname#2}}

\def\mathlig#1#2{\mathlig@checklig#1\mathlig@end\mathlig@defcs{mathlig@back@#1}{#2}\ignorespaces}


\def\mathlig@checklig#1#2\mathlig@end{%
 \ea\ifx\csname mathlig@forw@#1\endcsname\relax
  \ea\mathchardef\csname mathlig@back@#1\endcsname=\mathcode`#1%
   \mathcode`#1"8000\actively\def#1{\csname mathlig@look@#1\endcsname}%
    \mathlig@dolig#1\mathlig@delim
     \fi
      \mathlig@checksuffix#1#2\mathlig@end}

\def\mathlig@checksuffix#1#2\mathlig@end{%
 \ifx\mathlig@delim#2\mathlig@delim\relax\else\mathlig@checksuffix@{#1}#2\mathlig@end\fi}

\def\mathlig@checksuffix@#1#2#3\mathlig@end{%
 \ea\ifx\csname mathlig@forw@#1#2\endcsname\relax\mathlig@dosuffix{#1}{#2}\fi
  \mathlig@checksuffix{#1#2}#3\mathlig@end}


\def\mathlig@dosuffix#1#2{%
\mathlig@appendcs{mathlig@toks@#1}{#2}%
\mathlig@dolig{#1}{#2}\mathlig@delim
}


\def\mathlig@dolig#1#2\mathlig@delim{%
 \mathlig@defcs{mathlig@look@#1#2}{%
 \mathlig@let@cs\mathlig@next{mathlig@forw@#1#2}\futurelet\mathlig@next@tok\mathlig@next}%
 \mathlig@defcs{mathlig@forw@#1#2}{%
  \mathlig@let@cs\mathlig@next{mathlig@back@#1#2}%
  \mathlig@let@cs\checker{mathlig@chck@#1#2}%
  \mathlig@let@cs\mathligtoks{mathlig@toks@#1#2}%
  \ea\ifx\ea\mathlig@delim\mathligtoks\mathlig@delim\relax\else
  \ea\checker\mathligtoks\mathlig@delim\fi
  \mathlig@next
 }%
 \mathlig@defcs{mathlig@toks@#1#2}{}%
 \mathlig@defcs{mathlig@chck@#1#2}##1##2\mathlig@delim{%
  \ifx\mathlig@next@tok##1%
   \mathlig@let@cs\mathlig@next@cmd{mathlig@look@#1#2##1}\let\mathlig@next\mathlig@gobble
  \fi 
  \ifx\mathlig@delim##2\mathlig@delim\relax\else
   \csname mathlig@chck@#1#2\endcsname##2\mathlig@delim
  \fi
 }%
%
 \ifx\mathlig@delim#2\mathlig@delim\else
  \mathlig@defcs{mathlig@back@#1#2}{\csname mathlig@back@#1\endcsname #2}%
 \fi
}%

\catcode`@\mathlig@atcode

\mathlig{->}{\to}
\mathlig{|->}{\mapsto}
\mathlig{<=>}{\Leftrightarrow}
\mathlig{=>}{\Rightarrow}
\mathlig{<=}{\Leftarrow}
\mathlig{×}{\times}
\mathlig{·}{\cdot}
\mathlig{||}{\ \mid\ }
\mathlig{==}{\equiv}

\maketitle

\begin{abstract}
  This short note contains random thoughts about a factorization theorem for
  closure/interior operators on a powerset which is reminiscent to the notion
  of \emph{resolution} for a monad/comonad. The question originated from
  formal topology but is interesting in itself.

  The result holds constructively (even if it classically has several
  variations);  but usually not predicatively (in the sense that the
  interpolant will no be given by a set). For those not familiar with
  predicativity issues, we look at a ``classical'' version where we bound the
  size of the interpolant.
\end{abstract}


\section*{Introduction} 

A very general theorem states that any monotonic operator $F:\Pow(X) ->
\Pow(Y)$ can be factorized in the form $\Pow(X) -> \Pow(Z) -> \Pow(Y)$, where
$Z$ is an appropriate set; and the first predicate transformer commutes with
arbitrary unions and the second commutes with arbitrary intersections.

We prove similar factorization for interior and closure operators on a
power-set; the idea being to ``resolve'' the operator as is usually done for
(co)monad in categories. We then look at the constructive version of those
factorizations.

\section{Relations and Predicate Transformers} 

We start by introducing the basic notions:
\begin{definition}
  If $X$ and $Y$ are sets, a (binary) relation between $X$ and $Y$ is a subset
  of the cartesian product $X×Y$. The \emph{converse} of a relation
  $r\subseteq X×Y$ is the relation $\CONV{r}\subseteq Y×X$ defined as
  $(y,x)\in\CONV{r} == (x,y)\in r$.

  A \emph{predicate transformer} from $X$ to $Y$ is an operator from the
  powerset $\Pow(X)$ to the powerset $\Pow(Y)$.
\end{definition}
Since most of our predicate transformers will be monotonic (with respect to
inclusion), we drop the adjective when no confusion is possible.

\begin{definition}
  Suppose $r$ is a relation between $X$ and $Y$; we define two monotonic
  predicate transformers from $Y$ to $X$:
  \be
  \AngelU{r} &:& \Pow(Y) &-> & \Pow(X)\\
             & & V       &|->& \{ x\in X || (\exists y\in Y)\ (x,y)\in r \AND y\in V\}
  \ee
  and
  \be
  \DemonU{r} &:& \Pow(Y) &-> & \Pow(X)\\
             & & V       &|->& \{ x\in X || (\forall y\in Y)\ (x,y)\in r => y\in V\} \ \hbox{;}
  \ee
  and an antitonic predicate transformer:
  \be
  \BottomU{r} &:& \Pow(Y) &-> & \Pow(X)\\
             & & V       &|->& \{ x\in X || (\forall y\in V)\ (x,y)\in r\} \ \hbox{.}
  \ee
\end{definition}
Concerning notation:
\begin{itemize}
  \item $\AngelU{r}$ and $\DemonU{r}$ are somewhat common in the refinement
    calculus, even though the main reference (\cite{RC}) uses $\{r\}$ instead
    of $\AngelU{r}$. The problem is that this clashes with set theoretic
    notation.
  \item In \cite{Birkhoff}, Birkhoff uses $V^{\leftarrow}$ for $\BottomU{r}$,
    but this supposes that $r$ is clear from the context. (The notation
    $V^{->}$ would then be $\BottomU{\CONV{r}}(U)$.)
  \item The linear logic community would use $V^\bot$ for the same thing, but
    this also supposes that the relation is called ``$\bot$''.
\end{itemize}

\medbreak
We will always be in a ``typed'' context; \ie subsets will always be subset of
some ambient set. We write $\lnot$ for complementation with respect to that
ambient set.

\begin{lemma} \label{lem:negation}
  Suppose $r$ is a relation between $X$ and $Y$; we have:
  \begin{enumerate}
    \item $\AngelU{r}·\lnot = \lnot·\DemonU{r}$;
    \item $\AngelU{\lnot r} = \lnot·\BottomU{r}$.  \hskip2cm{\footnotesize(where
      $(x,y)\in\lnot r == (x,y)\notin r$)}
  \end{enumerate}
\end{lemma}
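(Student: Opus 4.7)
The plan is to prove both identities pointwise by unfolding the definitions of $\AngelU{r}$, $\DemonU{r}$ and $\BottomU{r}$, and then invoking standard equivalences for quantifier negation. At the meta-logical level, both statements are instances of the same De Morgan duality: an existential--conjunction pattern is dual to a universal--implication (or universal--membership) pattern. Consequently the proofs will be short, essentially reading off the definitions on both sides and comparing.

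For part (1), I would fix $V \subseteq Y$ and an arbitrary $x \in X$. Unfolding, $x \in \AngelU{r}(\lnot V)$ iff there exists $y \in Y$ with $(x,y) \in r$ and $y \notin V$, while $x \in \lnot\DemonU{r}(V)$ iff it is \emph{not} the case that for every $y \in Y$, $(x,y) \in r$ implies $y \in V$. The equivalence of these two formulas is the standard shift $\lnot \forall y. (P(y) \Rightarrow Q(y)) \Leftrightarrow \exists y. P(y) \AND \lnot Q(y)$.

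For part (2) the same strategy applies. For $V \subseteq Y$ and $x \in X$, the left-hand side unfolds to the existence of some $y \in V$ with $(x,y) \notin r$, while the right-hand side unfolds to the negation of ``for every $y \in V$, $(x,y) \in r$''. This is again a direct De Morgan shift between $\lnot \forall$ and $\exists \lnot$.

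The main obstacle is logical rather than technical: the De Morgan identities in their full strength are not constructively valid. The inclusions $\AngelU{r}\cdot\lnot \subseteq \lnot\cdot\DemonU{r}$ and $\AngelU{\lnot r} \subseteq \lnot\cdot\BottomU{r}$ do hold constructively, since an explicit witness refutes the corresponding universal. The reverse inclusions, however, require classical reasoning (double negation elimination, or equivalently turning a failed universal into an explicit counter-example). Given the paper's stated emphasis on constructivity, I would be careful to flag this lemma as a classical tool---or, alternatively, to keep only the constructively valid inclusions and weaken the equalities to whatever the downstream arguments actually need.
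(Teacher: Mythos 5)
Your proof is correct and is exactly the argument the paper leaves implicit (the lemma is stated without proof): unfold the definitions pointwise and apply the classical duality between $\lnot\forall$ and $\exists\lnot$. Your remark that only the inclusions $\AngelU{r}\cdot\lnot \subseteq \lnot\cdot\DemonU{r}$ and $\AngelU{\lnot r} \subseteq \lnot\cdot\BottomU{r}$ hold constructively is also consistent with the paper, which treats this lemma as a classical device and explicitly asks later whether its uses can be avoided in a constructive setting.
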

A very interesting property is the following Galois connections:
\begin{lemma} \label{lem:Galois}
  Suppose $r$ is a relation between $X$ and $Y$; then
  $\AngelU{r}\vdash\DemonU{r}$, and $\BottomU{r}$ is Galois-connected to
  itself:
  \begin{enumerate}
    \item $\AngelU{r}(V) \subseteq U <=> V \subseteq \DemonU{\CONV{r}}(U)$;
    \item $U\subseteq \BottomU{r}(V) <=> V \subseteq \BottomU{\CONV{r}}(U)$.
  \end{enumerate}
\end{lemma}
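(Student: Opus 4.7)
The plan is to unfold both sides of each equivalence and observe that they reduce to the same quantified formula, up to commuting quantifiers. Neither part requires any real work beyond careful bookkeeping.

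For part~(1), I would expand $\AngelU{r}(V)\subseteq U$ to the statement ``for every $x\in X$ and every $y\in Y$, if $(x,y)\in r$ and $y\in V$ then $x\in U$'', by pushing the existential in the definition of $\AngelU{r}$ through the outer universal quantifier coming from the inclusion. Then I would expand $V\subseteq\DemonU{\CONV{r}}(U)$ to ``for every $y\in V$ and every $x\in X$, if $(x,y)\in r$ then $x\in U$'', using the definitional equivalence $(y,x)\in\CONV{r}\Leftrightarrow (x,y)\in r$. The two are identical after a permutation of universals.

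For part~(2), the same recipe applies, even more directly because only universal quantifiers are in play. Unfolding $U\subseteq\BottomU{r}(V)$ gives ``for every $x\in U$, for every $y\in V$, $(x,y)\in r$'', and unfolding $V\subseteq\BottomU{\CONV{r}}(U)$ gives ``for every $y\in V$, for every $x\in U$, $(x,y)\in r$''; commuting the two universals yields one from the other, so the antitone Galois self-connection is manifest.

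There is no real obstacle here. An alternative route to~(1) would be to invoke the classical adjunction between $\exists$ and $\forall$ combined with Lemma~\ref{lem:negation}, and an alternative route to~(2) would be to use the identity $\AngelU{\lnot r}=\lnot\cdot\BottomU{r}$ to reduce to~(1) applied to $\lnot r$, but in both cases the detour is longer than the direct unfolding. The only conceptual point worth flagging in writing is that~(1) is a genuine (covariant) adjunction between monotone operators, whereas~(2) is an antitone Galois connection, which is why both inclusions in~(2) are presented ``from the outside'' in the form $\cdot\subseteq\BottomU{\cdot}(\cdot)$.
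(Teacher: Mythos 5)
Your proof is correct: the paper states this lemma without proof, treating it as an immediate consequence of unfolding the definitions, and your direct unfolding (reducing both sides of each equivalence to the same prenex formula $\forall x\,\forall y\ \dots$ up to commuting the universal quantifiers, using $(y,x)\in\CONV{r} \Leftrightarrow (x,y)\in r$) is exactly the intended routine argument. Your closing remark correctly distinguishes the covariant adjunction in~(1) from the antitone self-connection in~(2), which matches how the paper phrases the statement.
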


Those predicate transformers satisfy:
\begin{lemma} \label{lem:AngelDemonBottom}
  If $r$ is a relation between $X$ and $Y$, then
  \begin{itemize}
    \item $\AngelU{r}$ commutes with arbitrary unions;
      \hfill{\footnotesize(\ie it is a sup-lattice morphism\footnote{All of
      our lattices are complete, so we do not bother writing ``complete'' all
      the time...})}
    \item $\DemonU{r}$ commutes with arbitrary intersections;
      \hfill{\footnotesize(\ie it is an inf-lattice morphism)}
    \item $\BottomU{r}$ transforms arbitrary unions into intersections.
  \end{itemize}
  and moreover:
  \begin{itemize}
    \item any sup-lattice morphism from $\Pow(Y)$ to $\Pow(X)$ is of the form
      $\AngelU{r}$ for some $r\subseteq X×Y$;
    \item any inf-lattice morphism from $\Pow(Y)$ to $\Pow(Y)$ is of the form
      $\DemonU{r}$ for some $r\subseteq X×Y$;
    \item any predicate transformer from $Y$ to $X$ taking arbitrary unions to
      intersections is of the form $\BottomU{r}$ for some $r\subseteq X×Y$.
  \end{itemize}
\end{lemma}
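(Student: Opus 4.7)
The plan is to dispatch the three preservation statements by direct unfolding of the definitions, and then to obtain the three representation statements by constructing a relation from the values of the transformer on singletons, using Lemma~\ref{lem:negation} to avoid redoing the argument for $\DemonU{r}$ and $\BottomU{r}$.

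For the first half, I would simply note that $\AngelU{r}(V)$ is defined using an existential quantifier on $y\in V$, and existentials commute with unions; dually, $\DemonU{r}(V)$ uses a universal quantifier on $y$ guarded by $(x,y)\in r$, and universals commute with intersections of the body; and for $\BottomU{r}$, a universal quantifier over a union $\bigcup_i V_i$ decomposes as a conjunction of universal quantifiers over each $V_i$, which is the intersection of the $\BottomU{r}(V_i)$.

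For the representation statements, the key observation is that every subset $V\subseteq Y$ is the union of its singletons, so a sup-lattice morphism $F:\Pow(Y)->\Pow(X)$ is entirely determined by its values on singletons. I would therefore define
\[
  r \;=\; \{(x,y)\in X\times Y \mid x\in F(\{y\})\}
\]
and check, using commutation of $F$ with arbitrary unions, that $F(V)=\bigcup_{y\in V}F(\{y\})=\AngelU{r}(V)$ for every $V$. This proves the first representation statement.

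For the $\DemonU{r}$ case, if $F:\Pow(Y)->\Pow(X)$ preserves arbitrary intersections, then by de Morgan the transformer $G := \lnot\cdot F\cdot\lnot$ preserves arbitrary unions, hence $G=\AngelU{r}$ for some $r$ by the previous case; applying Lemma~\ref{lem:negation}(1) gives $F=\lnot\cdot\AngelU{r}\cdot\lnot=\DemonU{r}$. For the $\BottomU{r}$ case, if $F$ sends unions to intersections then $\lnot\cdot F$ sends unions to unions, so $\lnot\cdot F=\AngelU{s}$ for some $s\subseteq X\times Y$; taking $r=\lnot s$ and invoking Lemma~\ref{lem:negation}(2) yields $\lnot\cdot F=\AngelU{\lnot r}=\lnot\cdot\BottomU{r}$, hence $F=\BottomU{r}$. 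There is no real obstacle here; the only mildly delicate point is to make sure the reduction via Lemma~\ref{lem:negation} is done with the correct variance, which is why I would treat $\AngelU{r}$ first and derive the other two from it rather than attempting three parallel constructions.
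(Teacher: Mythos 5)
Your proposal is correct and follows essentially the same route as the paper: the preservation claims by unfolding definitions, the representation of sup-lattice morphisms via the relation $(x,y)\in r \Leftrightarrow x\in F(\{y\})$, and the other two representation claims by reduction through Lemma~\ref{lem:negation}. The only remark worth making is that your detour through complementation is inherently classical, whereas the paper later notes a direct (negation-free) relation for the inf-lattice case, namely $(x,y)\in r$ iff $(\forall U)\ y\in F(U) \Rightarrow x\in U$; but for the present lemma the paper itself invokes Lemma~\ref{lem:negation} exactly as you do.
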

\begin{proof}
The sup-lattice part is easy; and the rest is an application of
Lemma~\ref{lem:negation}.
\qed
\end{proof}

\bigbreak
Just like it is possible to factorize any relation as the composition of a
total function and the inverse of a total function, it is possible to
factorize an monotonic predicate transformer as the composition of a
$\AngelU{r}$ and a $\DemonU{s}$ (see \cite{newalg} for a detailed categorical
construction).
\begin{proposition}  \label{prop:PTfactorize}
  Suppose $F$ is a monotonic predicate transformer from $X$ to $Y$; then there
  is an $X'$ and there are relations $s\subseteq X'×X$ and $r\subseteq Y×X'$
  such that $F = \AngelU{r}·\DemonU{s}$.
\end{proposition}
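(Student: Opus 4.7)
The plan is to use the powerset of $X$ itself as the interpolant, encoding $F$ on the ``outer'' side and the membership relation on the ``inner'' side. I would set $X' := \Pow(X)$ and define
\[
s := \{(U,x) \in X'\times X \mid x\in U\}, \qquad r := \{(y,U) \in Y\times X' \mid y\in F(U)\} \ \hbox{.}
\]

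The verification then proceeds in three short steps. First, unfolding the definition of $\DemonU{s}$: since the $s$-images of $U\in X'$ are exactly the elements of $U$, one finds $\DemonU{s}(V) = \{U\in\Pow(X) \mid U\subseteq V\}$. Second, unfolding $\AngelU{r}$ on this set yields
\[
\AngelU{r}\bigl(\DemonU{s}(V)\bigr) \;=\; \{y\in Y \mid \exists U\subseteq V.\ y\in F(U)\} \ \hbox{.}
\]
Third, one checks this equals $F(V)$: the ``$\subseteq$'' direction is exactly the monotonicity of $F$ (any $y$ arising from some $U\subseteq V$ lies in $F(V)$), and the ``$\supseteq$'' direction takes $U := V$ as witness.

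I do not anticipate any real obstacle: the verification is just an unfolding of definitions together with a single appeal to monotonicity. The only creative step is picking the interpolant, and a more parsimonious choice such as $X' := \{(U,y) \mid y\in F(U)\}$ (with the relations adjusted in the obvious way) works just as well without shortening the argument. The shape of the construction---an $\AngelU{-}$ that reads out the values of $F$, composed with a $\DemonU{-}$ that internalises ``being a subset''---is precisely the predicate-transformer analogue of a free-algebra resolution, which is the theme the paper goes on to develop.
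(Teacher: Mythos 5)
Your construction is exactly the paper's: the interpolant $\Pow(X)$, the membership relation for $s$, the graph of $F$ for $r$, and the final equality $\{y \mid \exists U\subseteq V.\ y\in F(U)\} = F(V)$ via monotonicity (one direction) and the witness $U:=V$ (the other). The proposal is correct and matches the paper's proof step for step.
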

\begin{proof}
  Let $F$ be a monotonic predicate transformer, and define $X'=\Pow(X)$
  together with $(U,x)\in s == x\in U$ and $(y,U)\in r == y\in F(U)$.
  We have:

  \smallbreak
  $y \in \AngelU{r}·\DemonU{s}(U)$

  \step{<=>}{definition of $\AngelU{r}$}

  $(\exists V \in X')\ (y,V)\in r \AND V\in\DemonU{s}(U)$

  \step{<=>}{definition of $X'$ and $r$}

  $(\exists V\subseteq X)\ y\in F(V) \AND V\in\DemonU{s}(U)$

  \step{<=>}{definition of $\DemonU{s}$}

  $(\exists V\subseteq X)\ y\in F(V) \AND (\forall x)\ (V,x)\in s => x\in U$

  \step{<=>}{definition of $s$}

  $(\exists V\subseteq X)\ y\in F(V) \AND (\forall x)\ V\subseteq U$

  \step{<=>}{since $F$ is monotonic}

  $y\in F(U)$

  \smallbreak
  \noindent
  The result thus holds, but the proof doesn't bring much information...
\qed
\end{proof}

And as a direct application of Lemma~\ref{lem:negation}:
\begin{corollary}
  Any monotonic predicate transformer can be factorized as a
  $\DemonU{r}·\AngelU{s}$ or as a $\BottomU{r}·\BottomU{s}$.
  \\
  Similarly, any antitonic predicate transformer can be written has one of
  $\BottomU{r}·\DemonU{s}$, $\BottomU{r}·\AngelU{s}$, $\AngelU{r}·\BottomU{s}$
  or $\DemonU{r}·\BottomU{s}$.
\end{corollary}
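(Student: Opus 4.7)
The plan is to reduce each of these factorizations to the one already proved in Proposition~\ref{prop:PTfactorize}, using Lemma~\ref{lem:negation} to shuttle negations through the various predicate transformers. The only algebraic fact being exploited is that $\lnot\cdot\lnot$ is the identity, so pairs of negations can be freely inserted around $F$ and then absorbed into the neighbouring operators.

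For the monotonic case, I would first apply Proposition~\ref{prop:PTfactorize} to $\lnot\cdot F\cdot\lnot$, which is again monotonic, obtaining $\lnot\cdot F\cdot\lnot = \AngelU{a}\cdot\DemonU{b}$. Pushing the two outer negations inwards with identity~(1) of Lemma~\ref{lem:negation}, which gives $\lnot\cdot\AngelU{a} = \DemonU{a}\cdot\lnot$ and $\DemonU{b}\cdot\lnot = \lnot\cdot\AngelU{b}$, produces two adjacent $\lnot$'s in the middle; they cancel, leaving $F = \DemonU{a}\cdot\AngelU{b}$. For the $\BottomU\cdot\BottomU$ form I would continue from this expression: identity~(2) turns $\AngelU{b}$ into $\lnot\cdot\BottomU{\lnot b}$, and identity~(2) combined with identity~(1) turns $\DemonU{a}$ into $\BottomU{\lnot a}\cdot\lnot$; the resulting $\lnot\cdot\lnot$ in the middle again cancels, giving $F = \BottomU{\lnot a}\cdot\BottomU{\lnot b}$.

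For the antitonic case, one of $\lnot\cdot F$ or $F\cdot\lnot$ is monotonic, so I would apply Proposition~\ref{prop:PTfactorize} or one of the two monotonic forms just derived to it, and then recover $F$ by multiplying back by the single remaining $\lnot$. That lone $\lnot$ sits next to exactly one of the $\AngelU$/$\DemonU$/$\BottomU$ factors and, by Lemma~\ref{lem:negation}, converts the pair into one of the four advertised shapes. The four combinations of (left vs.\ right placement of the remaining $\lnot$)$\,\times\,$(which factorization of $\lnot\cdot F$ or $F\cdot\lnot$ one starts from) account for $\BottomU\cdot\DemonU$, $\BottomU\cdot\AngelU$, $\AngelU\cdot\BottomU$, and $\DemonU\cdot\BottomU$ respectively.

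There is no genuine obstacle; the whole argument is symbolic manipulation with $\lnot$. The only care required is bookkeeping: keeping track of whether a relation appears as $r$ or as $\lnot r$ in the final expression, and making sure every stray $\lnot$ either cancels with a neighbour or is absorbed by an adjacent predicate transformer via the identities of Lemma~\ref{lem:negation}.
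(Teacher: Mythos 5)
Your handling of the monotonic case is correct and is exactly the ``direct application of Lemma~\ref{lem:negation}'' the paper intends: conjugate $F$ by $\lnot$, apply Proposition~\ref{prop:PTfactorize}, and absorb the negations into the neighbouring operators. The same goes for two of the antitonic shapes: from $\lnot\cdot F=\AngelU{a}\cdot\DemonU{b}$ you get $F=\BottomU{\lnot a}\cdot\DemonU{b}$, and from $F\cdot\lnot=\AngelU{a}\cdot\DemonU{b}$ you get $F=\AngelU{a}\cdot\lnot\cdot\AngelU{b}=\AngelU{a}\cdot\BottomU{\lnot b}$.

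The gap is in the last step, where you assert that the ``four combinations'' account for all four advertised shapes. If you actually carry out the bookkeeping you are deferring, it does not close up: starting from $\lnot\cdot F=\DemonU{a}\cdot\AngelU{b}$ gives $F=\AngelU{a}\cdot\lnot\cdot\AngelU{b}=\AngelU{a}\cdot\BottomU{\lnot b}$, which is the \emph{same} shape you already obtained, and starting from $F\cdot\lnot=\DemonU{a}\cdot\AngelU{b}$ gives $F=\DemonU{a}\cdot\lnot\cdot\DemonU{b}$, where the residual $\lnot\cdot\DemonU{b}$ equals $\AngelU{b}\cdot\lnot$ (identity~(1) again) but is not a $\BottomU{s}$ of anything, so you never reach $\DemonU{r}\cdot\BottomU{s}$. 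This is not a fixable slip: by Lemma~\ref{lem:AngelDemonBottom}, $\AngelU{s}$ preserves unions, $\DemonU{r}$ preserves intersections and $\BottomU{r}$ turns arbitrary unions into intersections, so both composites $\BottomU{r}\cdot\AngelU{s}$ and $\DemonU{r}\cdot\BottomU{s}$ necessarily transform arbitrary unions into intersections, and hence each collapses to a single $\BottomU{t}$. A generic antitonic $F$ does not have this property --- on $X=\{0,1\}$ take $F(\emptyset)=F\{0\}=F\{1\}=X$ and $F\{0,1\}=\emptyset$, so that $F(\{0\}\cup\{1\})=\emptyset\neq X=F\{0\}\cap F\{1\}$ --- so those two factorizations are unobtainable in general. (The paper offers no written proof of this corollary, and the defect is really in the statement itself unless ``one of'' is read as ``at least one of''; but your argument, as written, claims to produce all four shapes and cannot.) What your method genuinely establishes is the monotonic part together with the $\BottomU{r}\cdot\DemonU{s}$ and $\AngelU{r}\cdot\BottomU{s}$ forms.
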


\section{Interior and Closure Operators} 
\label{sec:InteriorClosure}

\begin{definition}  \label{defn:InteriorClosure}
  If $(X,\leq)$ is a partial order, we say that $F:X->X$ is an interior
  operator if:
  \begin{itemize}
    \item $F$ is monotonic;
    \item $F$ is contractive: $F(x)\leq x$;
    \item $F(x) \leq FF(x)$.
  \end{itemize}
  We say that it is a closure operator if:
  \begin{itemize}
    \item $F$ is monotonic;
    \item $F$ is expansive: $x\leq F(x)$;
    \item $FF(x) \leq F(x)$.
  \end{itemize}
\end{definition}

It is well known that the composition of two Galois connected operators yield
interior/closure operators, so that we have:
\begin{lemma} \label{lem:interior}
  If $r$ is a relation between $X$ and $Y$, then
  \begin{itemize}
    \item $\AngelU{r}·\DemonU{\CONV{r}}$ is an interior operator on $\Pow(X)$;
    \item $\DemonU{r}·\AngelU{\CONV{r}}$ is a closure operator on $\Pow(X)$;
    \item $\BottomU{r}·\BottomU{\CONV{r}}$ is a closure operator on $\Pow(X)$.
  \end{itemize}
  Some other consequences of the Galois connection are listed below:
  \begin{itemize}
    \item $\AngelU{r}·\DemonU{\CONV{r}}·\AngelU{r} = \AngelU{r}$;
    \item $\DemonU{r}·\AngelU{\CONV{r}}·\DemonU{r} = \DemonU{r}$;
    \item $\BottomU{r}·\BottomU{\CONV{r}}·\BottomU{r} = \BottomU{r}$.
  \end{itemize}
\end{lemma}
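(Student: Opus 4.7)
The plan is to derive every claim as a formal consequence of the Galois connections recorded in Lemma~\ref{lem:Galois}, using the standard calculus of adjunctions on posets. I would begin by recalling that whenever $L \vdash R$ between two posets, feeding $a = R(b)$ into the equivalence $L(a) \leq b$ iff $a \leq R(b)$ yields the counit inequality $LR \subseteq \Id$, and dually feeding $b = L(a)$ yields the unit $\Id \subseteq RL$; idempotence of both composites then falls out of these two inequalities combined with monotonicity of $L$ and $R$. For the antitone connection of Lemma~\ref{lem:Galois}.2, the symmetry of $U \subseteq \BottomU{r}(V)$ iff $V \subseteq \BottomU{\CONV{r}}(U)$ under swapping the two sides immediately yields that both composites are expansive, and monotonicity of each composite is automatic since the composition of two antitone maps is monotone.

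Next I would verify the interior property of $\AngelU{r}\cdot\DemonU{\CONV{r}}$: monotonicity comes from Lemma~\ref{lem:AngelDemonBottom}, contractivity $\AngelU{r}\cdot\DemonU{\CONV{r}}(U) \subseteq U$ is Lemma~\ref{lem:Galois}.1 applied at $V = \DemonU{\CONV{r}}(U)$, and the iterate inequality $F(U) \subseteq FF(U)$ follows by applying $\AngelU{r}$ to the unit inclusion $\DemonU{\CONV{r}}(U) \subseteq \DemonU{\CONV{r}}\cdot\AngelU{r}\cdot\DemonU{\CONV{r}}(U)$. The closure case for $\DemonU{r}\cdot\AngelU{\CONV{r}}$ is perfectly symmetric once one reads Lemma~\ref{lem:Galois}.1 with $\CONV{r}$ in place of $r$, so that $\AngelU{\CONV{r}} \vdash \DemonU{r}$. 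For the antitone closure $\BottomU{r}\cdot\BottomU{\CONV{r}}$, the previous paragraph already gives expansivity and monotonicity; idempotence then follows from the classical identity $L\cdot R\cdot L = L$ for antitone Galois connections (itself obtained by the same sandwich argument as in the monotone case).

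Finally, the three triangle identities all share the same pattern: sandwich the middle operator between the unit and the counit. For instance, $\AngelU{r}\cdot\DemonU{\CONV{r}}\cdot\AngelU{r} = \AngelU{r}$ follows in one direction by post-composing the counit $\AngelU{r}\cdot\DemonU{\CONV{r}} \subseteq \Id$ with $\AngelU{r}$, and in the other by applying monotonicity of $\AngelU{r}$ to the unit $\Id \subseteq \DemonU{\CONV{r}}\cdot\AngelU{r}$. The remaining two identities are verbatim analogues using the appropriate Galois connection. I do not anticipate a real obstacle: the whole lemma is a formal exercise in adjunction calculus, and the only step requiring a moment's thought is matching each composite with the direction of the Galois connection that controls it --- in particular, remembering that the $\BottomU{}$ variant produces a closure on \emph{both} sides rather than the more familiar closure/interior pairing.
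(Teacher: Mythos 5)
Your proposal is correct and follows exactly the route the paper intends: the paper gives no explicit proof, simply invoking the well-known fact that composing Galois-connected maps yields interior/closure operators, and your argument is a careful unfolding of that fact (unit/counit inequalities from Lemma~\ref{lem:Galois}, plus the automatic triangle identities in a poset). The details you supply, including the observation that the antitone connection $\BottomU{r}\vdash\BottomU{\CONV{r}}$ gives a closure on both sides, all check out.
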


The problem is now to mimic Proposition~\ref{prop:PTfactorize}.
\begin{definition}
  If $F$ is an interior operator on $\Pow(X)$, a \emph{resolution} for $F$ is
  given by a set $Y$ (called the interpolant) together with a relation $r\subseteq Y×X$
  such that $F=\AngelU{r}·\DemonU{\CONV{r}}$.
\end{definition}

\begin{proposition} \label{prop:interior}
  If $F$ is an interior operator on $\Pow(X)$, then it has a resolution.
\end{proposition}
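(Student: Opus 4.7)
The plan is to imitate the proof of Proposition~\ref{prop:PTfactorize} but to use a single relation whose converse plays the role of the second one. Since the interior operator $F$ is already defined on $\Pow(X)$ and is idempotent (from contractivity $F(U)\subseteq U$, monotonicity, and $F(U)\subseteq FF(U)$ one gets $FF=F$), the natural interpolant is again $Y=\Pow(X)$, together with the relation $r\subseteq X\times Y$ defined by $(x,U)\in r \mathrel{:=} x\in F(U)$. Intuitively, the elements of the interpolant are ``opens'', and $r$ records which points they contain (after applying $F$, which does not matter by idempotence).

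Next I would unfold the definitions. The demonic transformer becomes $\DemonU{\CONV{r}}(V)=\{U\in\Pow(X) \mid F(U)\subseteq V\}$, and the angelic one then yields
\[
\AngelU{r}\cdot\DemonU{\CONV{r}}(V) \;=\; \bigcup\bigl\{F(U) \mid U\subseteq X,\ F(U)\subseteq V\bigr\}.
\]
It remains to check that this equals $F(V)$, and this is where the three defining properties of an interior operator come in.

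For the inclusion $\subseteq$, if $F(U)\subseteq V$ then monotonicity gives $FF(U)\subseteq F(V)$, and idempotence gives $F(U)=FF(U)\subseteq F(V)$, so every set in the union lies inside $F(V)$. For the reverse inclusion $\supseteq$, the choice $U:=V$ is available since $F(V)\subseteq V$ by contractivity; hence $F(V)$ itself appears in the union on the right.

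There is no real obstacle here: all three axioms of an interior operator are needed (monotonicity for the upper bound, idempotence to collapse $FF$ to $F$, and contractivity to witness that $U=V$ is a valid choice), but each step is a one-line calculation. The only noteworthy feature is conceptual rather than technical, namely that the same construction used in Proposition~\ref{prop:PTfactorize} can be recycled with a \emph{single} relation because the Galois connection $\AngelU{r}\vdash\DemonU{\CONV{r}}$ of Lemma~\ref{lem:Galois} is already tailored to produce interior operators, as recorded in Lemma~\ref{lem:interior}.
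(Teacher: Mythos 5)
Your proof is correct, but it takes a different route from the paper's. The key computation $\AngelU{r}\cdot\DemonU{\CONV{r}}(V)=\bigcup\{F(U) \mid F(U)\subseteq V\}$ is right, and both inclusions do follow from the three axioms exactly as you indicate (monotonicity plus idempotence for $\subseteq$, contractivity with the witness $U:=V$ for $\supseteq$). The paper instead takes the interpolant to be $Y=\Fix(F)$ with $(x,U)\in r$ iff $x\in U$, and delegates the verification to Lemma~\ref{lem:FixInterior}, which says that $F(U)=\bigcup\{V\in\Fix(F)\mid V\subseteq U\}$ for any interior operator on a complete sup-lattice. The two constructions are closely related: every set $F(U)$ occurring in your union is a fixed point of $F$, and your relation restricted to $U\in\Fix(F)$ is exactly the paper's, so the two unions range over the same family of fixed points. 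What the paper's choice buys is the smaller, structured interpolant $\Fix(F)$, which is precisely what drives the rest of the section --- the corollary bounding the cardinality of the interpolant by that of a basis of $\Fix(F)$, the optimality result of Lemma~\ref{lem:BaseFromResolution}, and the comparison with the Eilenberg--Moore resolution. What your version buys is elementarity: it is literally a recycling of the construction of Proposition~\ref{prop:PTfactorize} with a single relation, needs no fixed-point lemma, and makes visible exactly where each axiom of an interior operator is used; the price is an interpolant of size $\Pow(X)$ with no immediate handle on how small it could be made.
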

The proof relies on the following lemma:
\begin{lemma} \label{lem:FixInterior}
  Let $F$ be an interior operator on a complete sup-lattice
  $(X,\leq,\bigvee)$; write $\Fix(F)$ for the collection of fixed-point for
  $F$. We have that $(\Fix(F),\leq,\bigvee)$ is a complete sup-lattice; and
  for any $x\in X$
  \[
    F(x) = \bigvee \big\{ y\in\Fix(F) || y\leq x\big\} \ \hbox{.}
  \]
\end{lemma}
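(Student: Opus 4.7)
The plan is to unpack three elementary observations about the axioms of an interior operator and then assemble them.

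First I would observe the key fact that $F(x)$ is \emph{always} a fixed point of $F$: by contractivity applied to $F(x)$ we have $FF(x) \leq F(x)$, while the third axiom gives $F(x) \leq FF(x)$. So $FF(x) = F(x)$, i.e.\ $F(x) \in \Fix(F)$. This is the only place where the two ``non-monotonicity'' axioms are used together, and is really the content of the lemma.

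Next I would establish the displayed formula. Let $S = \{y \in \Fix(F) \mid y \leq x\}$. On one hand, $F(x)\in S$ (contractivity gives $F(x)\leq x$, and the first step gives $F(x)\in\Fix(F)$), so $F(x) \leq \bigvee S$. On the other hand, for any $y\in S$, monotonicity of $F$ together with $y\leq x$ gives $y = F(y) \leq F(x)$, whence $\bigvee S \leq F(x)$. The two inequalities combine to give the desired equality.

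Finally, to show $(\Fix(F),\leq,\bigvee)$ is a complete sup-lattice it is enough to check that $\Fix(F)$ is closed under the sups computed in $X$. Given a family $\{y_i\}_{i\in I}\subseteq \Fix(F)$, set $z = \bigvee_i y_i$. Each $y_i \leq z$ gives (by monotonicity and $y_i = F(y_i)$) that $y_i \leq F(z)$, hence $z \leq F(z)$; contractivity provides the reverse inequality, so $F(z)=z$ and $z\in\Fix(F)$. The empty case just observes that the bottom element $\bigvee\emptyset$ of $X$ ends up in $\Fix(F)$ by the same argument.

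There is no real obstacle here: everything follows once one notices that $F(x)$ is a fixed point. I would present the proof in exactly the order above so that the last two parts can reuse the monotonicity/contractivity pattern transparently.
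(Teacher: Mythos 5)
Your proof is correct and follows essentially the same route as the paper's: the same two inequalities establish the displayed formula, and you simply fill in the details the paper leaves implicit (that $F(x)$ is a fixed point, and the ``easy exercise'' that $\Fix(F)$ is closed under the sups of $X$). Nothing further is needed.
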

\begin{proof} That $(\Fix(F),\leq,\bigvee)$ is a complete sup-lattice is left
  as an easy exercise; for the second point, let $x\in X$;
  \begin{itemize}
    \item we know that $F(x)$ is a fixed point of $F$, and that $F(x)\leq x$.
      This implies that $F(x)\in\{y\in\Fix(F)||y\leq x\}$; and so
      $F(x)\leq\bigvee\{y\in\Fix(F)||y\leq x\}$;
    \item suppose $y\in\Fix(F)$ and $y\leq x$; this implies that $F(y)\leq
      F(x)$, \ie that $y\leq F(x)$. We can conclude that
      $\bigvee\{y\in\Fix(F) || y\leq x\} \leq F(x)$.
  \end{itemize}
\qed
\end{proof}

\begin{proof}[of proposition~\ref{prop:interior}]
  Suppose $F$ is an interior operator on $\Pow(X)$; define $Y=\Fix(F)$ and
  $(U,x)\in r == x\in U$. We have:

  \smallbreak
  $x\in \AngelU{r}·\DemonU{\CONV{r}}(U)$

  \step{<=>}{definition of $r$}

  $\big(\exists V\in\Fix(F)\big)\ x\in V \AND (\forall y)\,y\in V => y\in U$

  \step{<=>}{}

  $\big(\exists V\in\Fix(F)\big)\ x\in V \AND V \subseteq U$

  \step{<=>}{}

  $x\in\bigcup \{ V\in\Fix(F) || V\subseteq U\}$

  \step{<=>}{Lemma~\ref{lem:FixInterior}}

  $x\in F(U)$

  \smallbreak
  \noindent
  which concludes the proof.
\qed
\end{proof}
Just like for Proposition~\ref{prop:PTfactorize}, the statement of the theorem
is interesting, but the proof hardly tells us anything about the structure
of~$F$. To gain a little more information about $F$, we will try to ``bound''
the size of the interpolant set $Y$.

\begin{definition}
  If $(X,\leq,\bigvee)$ is a complete sup-lattice, we say that a family
  $(x_i)_{i\in I}$ of element of $X$ is a basis if, for any $y\in X$, we have
  \[ y = \bigvee \{x_i || x_i\leq y\} \ \hbox{.}\]
\end{definition}

\begin{corollary}
  Suppose $F$ is an interior operator on $\Pow(X)$; if
  $\big(\Fix(F),\subseteq,\bigcup\big)$ has a basis of cardinality $\kappa$,
  then we can find a resolution of $F$ with an interpolant $Y$ of cardinality
  $\kappa$.
\end{corollary}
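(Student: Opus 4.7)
The plan is to adapt the proof of Proposition~\ref{prop:interior} by using the basis $(V_i)_{i\in I}$ (with $|I|=\kappa$) in place of the whole of $\Fix(F)$ as the interpolant. So set $Y = I$ and define the relation $r\subseteq I\times X$ by $(i,x)\in r \iff x\in V_i$. We then must verify that $F = \AngelU{r}\cdot\DemonU{\CONV{r}}$.

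Unfolding the definitions exactly as in the proof of Proposition~\ref{prop:interior}, we get
\[
  x\in \AngelU{r}\cdot\DemonU{\CONV{r}}(U)
  \iff (\exists i\in I)\ x\in V_i \AND V_i \subseteq U
  \iff x\in\bigcup\{V_i \mid V_i\subseteq U\}\ \hbox{.}
\]
So the task reduces to showing that $\bigcup\{V_i \mid V_i\subseteq U\} = F(U)$.

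By Lemma~\ref{lem:FixInterior}, $F(U) = \bigcup\{V\in\Fix(F) \mid V\subseteq U\}$, so one inclusion is immediate since each basis element is itself a fixed point. For the other inclusion, I would use the defining property of a basis: given any $V\in\Fix(F)$ with $V\subseteq U$, write $V = \bigcup\{V_i \mid V_i\subseteq V\}$; each such $V_i$ is also contained in $U$, so $V\subseteq\bigcup\{V_i \mid V_i\subseteq U\}$, and taking the union over all such $V$ yields $F(U)\subseteq\bigcup\{V_i \mid V_i\subseteq U\}$.

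There is no real obstacle here: the only subtlety is recognising that the basis axiom ``every element is the sup of the basis elements it dominates'' is exactly what is needed to replace the witnesses $V\in\Fix(F)$ coming from Proposition~\ref{prop:interior} by smaller witnesses drawn from the basis, without changing the overall union. The cardinality bound $|Y|=\kappa$ is then built in by construction.
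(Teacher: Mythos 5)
Your proposal is correct and follows exactly the route the paper intends: the paper's proof simply says to replace $\Fix(F)$ by a basis in the proof of Proposition~\ref{prop:interior}, and you have carried out that replacement, supplying the verification (via the basis axiom and Lemma~\ref{lem:FixInterior}) that the paper leaves implicit.
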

\begin{proof}
  It is easy to see that in the above proof of
  Proposition~\ref{prop:interior}, we can replace $\Fix(F)$ by a basis of
  $\big(\Fix(F),\subseteq,\bigcup\big)$.
\qed
\end{proof}
In particular, if there is a basis of $\Fix(F)$ which has cardinality less that
the cardinality of $X$; we can use $X$ as the interpolant and use a
relation $r\subseteq X×X$ to obtain a resolution of $F$.

\medbreak
We now show that this result is optimal:
\begin{lemma} \label{lem:BaseFromResolution}
  Let $F$ be an interior operator on $\Pow(X)$; and suppose there are no basis
  of $\Fix(F)$ of cardinality $\kappa$; then there is no interpolant of
  cardinality less than $\kappa$.
\end{lemma}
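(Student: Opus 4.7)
The plan is to prove the contrapositive: from any resolution of $F$ with interpolant $Y$, I will extract a basis of $\Fix(F)$ indexed by $Y$. That way, if $\Fix(F)$ admits no basis of cardinality less than $\kappa$, no interpolant of cardinality less than $\kappa$ can exist either.

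Fix a resolution $F = \AngelU{r}\cdot\DemonU{\CONV{r}}$ with $r\subseteq Y\times X$. The candidate basis is the family $U_y := \AngelU{r}(\{y\})$ indexed by $y\in Y$. The first routine check is that each $U_y$ actually lies in $\Fix(F)$: this is immediate from the identity $\AngelU{r}\cdot\DemonU{\CONV{r}}\cdot\AngelU{r} = \AngelU{r}$ already recorded in Lemma~\ref{lem:interior}.

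The real content is the basis condition: every $U\in\Fix(F)$ must equal the union of those $U_y$ it contains. The guiding observation is that $\Fix(F)$ coincides with the image of $\AngelU{r}$, since if $F(U)=U$ then $U = \AngelU{r}\big(\DemonU{\CONV{r}}(U)\big)$. Setting $V := \DemonU{\CONV{r}}(U)$ and using that $\AngelU{r}$ commutes with arbitrary unions (Lemma~\ref{lem:AngelDemonBottom}), I can rewrite $U = \AngelU{r}(V) = \bigcup_{y\in V} U_y$. Monotonicity forces each such $U_y$ to be contained in $U$, and the sandwich $U \subseteq \bigcup\{U_y \mid U_y \subseteq U\} \subseteq U$ delivers the basis condition.

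I do not foresee a serious obstacle: the heart of the argument is just that $\AngelU{r}$, being a sup-lattice morphism into $\Pow(X)$, is determined by its values on singletons, so the $U_y$'s automatically generate the image of $\AngelU{r}$ — which is precisely $\Fix(F)$ — under arbitrary unions. The only thing to keep in mind is that distinct $y$'s may yield the same $U_y$, so the bound one gets on the basis is in fact $\leq|Y|$, which is still enough to contradict the hypothesis.
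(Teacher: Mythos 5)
Your proof is correct and follows essentially the same route as the paper: both take the family $U_y = \AngelU{r}\{y\}$, verify it lands in $\Fix(F)$ via $\AngelU{r}·\DemonU{\CONV{r}}·\AngelU{r} = \AngelU{r}$, and recover each fixed point as a union of the $U_y$ it contains using the fact that $\AngelU{r}$ preserves unions. The only cosmetic difference is that you close the argument with monotonicity and a sandwich where the paper invokes the Galois connection $\{y\}\subseteq\DemonU{\CONV{r}}(U) \Leftrightarrow \AngelU{r}\{y\}\subseteq U$ directly; both are fine.
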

\begin{proof}
  To show that, we will construct a basis of $\Fix(F)$ indexed by any
  interpolant for $F$. Suppose $Y$ and $r$ form a resolution for $F$.  For any
  $y\in Y$, define $U_y == \AngelU{r}\{y\}$. We will show that $(U_y)_{y\in
  Y}$ is a basis for $\Fix(F)$.

  \smallbreak\noindent
  Each $U_y$ is a fixed point for $F$:

  $\begin{array}{llll}
    U_y &=& \AngelU{r} \{y\}                              & \mskip15mu \{ \hbox{\small definition}\} \\
        &=& \AngelU{r}·\DemonU{\CONV{r}}·\AngelU{r}\{y\}  & \mskip15mu \{ \hbox{\small second part of Lemma~\ref{lem:interior}}\} \\
        &=& F·\AngelU{r}\{y\}                             & \mskip15mu \{ \hbox{\small $r$ is a resolution of $F$}\} \\
        &=& F(U_y)                                        & \mskip15mu \{ \hbox{\small definition}\}
  \end{array}$

  \smallbreak\noindent
  Let $U$ be a fixed point of $F$:

  $\begin{array}{llll}
    U &=& \AngelU{r}·\DemonU{\CONV{r}}(U)                                                &  \mskip15mu \{ \hbox{\small $U$ is a fixed point of $F$} \}\\
    &=& \AngelU{r}\big( \bigcup \big\{ \{y\} || y \in \DemonU{\CONV{r}}(U) \big\}  \big) &  \mskip15mu \{ \hbox{\small basic logic} \}\\
    &=& \bigcup \big\{ \AngelU{r}\{y\} || \{y\} \subseteq \DemonU{\CONV{r}}(U) \big\}    &  \mskip15mu \{ \hbox{\small $\AngelU{r}$ commutes with unions} \} \\
    &=& \bigcup \big\{ \AngelU{r}\{y\} || \AngelU{r}\{y\} \subseteq U\big\}              &  \mskip15mu \{ \hbox{\small Galois connection between $\AngelU{r}$ and $\DemonU{\CONV{r}}$}\} \\
    &=& \bigcup \{ U_y || U_y\subseteq U\}                                               &  \mskip15mu \{ \hbox{\small definition}\}
  \end{array}$

  \noindent
  which concludes the proof that $(U_y)_{y\in Y}$ is a basis for $\Fix(F)$.
\qed
\end{proof}

\bigbreak
Let's look at an example of interior operator on $\Pow(X)$ which \emph{cannot}
be resolved using $X$ as an interpolant. Let $X$ be a countable infinite set
(natural numbers for example); and define $F:\Pow(X)->\Pow(X)$ as follows:
\[
F(U) = \left\{\begin{array}{ll}
                \emptyset & \hbox{if $U$ is finite}\\
                U         & \hbox{if $U$ is infinite}
              \end{array}\right.
\]
The sup-lattice $\Fix(F)$ is given by the collection of infinite subsets of
$X$; and this lattice doesn't have a countable basis. To prove that, it is
enough to do it for any particular countable infinite set. Take $C$ to be the
set of finite strings over $\{0,1\}$. If $\alpha$ is an infinite string of
$0$'s and $1$'s, define $U_\alpha \subseteq C$ to be the set of finite
prefixes of $\alpha$. Each $U_\alpha$ is an element of $\Fix(F)$; but no
countable family of infinite subsets can ``generate'' all the $U_\alpha$:
since $\alpha\neq\beta$ implies that $U_\alpha \cap U_\beta$ is finite, if
$V_i \subseteq U_\alpha$ and $V_j \subseteq U_\beta$ then $i\neq j$. In other
words, a family which generates all the $U_\alpha$'s needs to have the
cardinality of the collection of the $U_\alpha$'s, \ie uncountable.

\bigbreak
Using Lemma~\ref{lem:negation}, we can now extend all what has been done for
interior operators for closure operators: if $F$ is a closure operator on $\Pow(X)$,
then $\lnot·F·\lnot$ is an interior operator on $\Pow(X)$.

\begin{corollary} \label{cor:closure}
  If $F$ is a closure operator on $\Pow(X)$, then it has a resolution as a
  composition $\DemonU{r}·\AngelU{\CONV{r}}$ or as
  $\BottomU{r}·\BottomU{\CONV{r}}$.

  As for interior operators, the possible cardinalities of the interpolant are
  given by the cardinalities of the bases for the inf-lattice $\Fix(F)$.
\end{corollary}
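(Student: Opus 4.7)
The plan is to dualize via complementation. Set $G := \lnot \cdot F \cdot \lnot$; because $\lnot$ is an order-reversing involution on $\Pow(X)$, the hypothesis that $F$ is a closure operator translates directly into $G$ being an interior operator (monotonicity is preserved, and the inequalities $A \subseteq F(A)$, $FF(A) \subseteq F(A)$ become $G(B) \subseteq B$, $G(B) \subseteq GG(B)$ after substituting $B = \lnot A$). Applying Proposition~\ref{prop:interior} to $G$ then yields a resolution $G = \AngelU{r} \cdot \DemonU{\CONV{r}}$ with interpolant $Y = \Fix(G)$.

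Next I recover $F = \lnot \cdot G \cdot \lnot = \lnot \cdot \AngelU{r} \cdot \DemonU{\CONV{r}} \cdot \lnot$ and use Lemma~\ref{lem:negation} to push the two $\lnot$'s inward until they meet and cancel. Part~(1) of the lemma, rewritten as $\lnot \cdot \AngelU{r} = \DemonU{r} \cdot \lnot$ and $\DemonU{\CONV{r}} \cdot \lnot = \lnot \cdot \AngelU{\CONV{r}}$, gives the first form $F = \DemonU{r} \cdot \AngelU{\CONV{r}}$. Applying part~(2) in the form $\lnot \cdot \AngelU{r} = \BottomU{\lnot r}$ and $\lnot \cdot \AngelU{\CONV{r}} = \BottomU{\CONV{\lnot r}}$ (obtained by substituting $r \mapsto \lnot r$ and using $\lnot\lnot = \Id$) yields instead $F = \BottomU{s} \cdot \BottomU{\CONV{s}}$ with $s := \lnot r$, giving the second form.

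For the cardinality statement, observe that $U \mapsto \lnot U$ is an order-reversing bijection between $\Fix(F)$ and $\Fix(G)$ (since $G(\lnot U) = \lnot F(U)$), so a sup-lattice basis of $\Fix(G)$ of cardinality $\kappa$ corresponds under $\lnot$ to an inf-lattice basis of $\Fix(F)$ of the same cardinality. The ``basis $\Rightarrow$ resolution'' direction then follows from the cardinality refinement of Proposition~\ref{prop:interior} applied to $G$, transported back through the identities above. The converse direction is obtained by transporting any resolution of $F$ to one of $G$ with the same interpolant and invoking Lemma~\ref{lem:BaseFromResolution}, whose output basis is then pushed back through $\lnot$ to a basis of the inf-lattice $\Fix(F)$. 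The only real work is the bookkeeping of negations and converses, and the main pitfall to watch for is keeping $r$ versus $\lnot r$ (and $\CONV{r}$ versus $\CONV{\lnot r}$) straight when invoking the two parts of Lemma~\ref{lem:negation}; no essentially new mathematical idea is required beyond the interior case.
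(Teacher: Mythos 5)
Your proposal is correct and is exactly the paper's route: the paper proves the corollary by observing that $\lnot\cdot F\cdot\lnot$ is an interior operator, applying Proposition~\ref{prop:interior} to it, and pushing the negations through with Lemma~\ref{lem:negation} — precisely the dualization you carry out (and you supply more of the $r$ versus $\lnot r$ bookkeeping than the paper bothers to write down).
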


In particular, for linear logicians, it is not the case that any closure
operator can be written as a biorthogonal...

\section{Comonad and Monads} 

It seems that the traditional way to look at monads in a category is to see
them as a kind of generalized monoid; at least in my part of the world.
Another view\footnote{which has given me a much better understanding of what
(co)monads are} is to view them as a generalization of closure operators. This
is the view taken in the introduction of~\cite{LambekScott}.

\begin{definition}
  A monad on a category $\mathcal{C}$ is a morphism $F:\mathcal{C} ->
  \mathcal{C}$ together with two natural transformations
  $\eta:\Id_{\mathcal{C}} -> F$ and $\mu:FF->F$ s.t. some diagram commute.
\end{definition}
If one takes the partial order category $\Pow(X)$, then a monad is an operator
on $\Pow(X)$ s.t.:
\begin{itemize}
  \item it acts on morphisms: if $i:U\subseteq V$ then $F_i:F(U)\subseteq
    F(V)$; \ie $F$ is monotonic;
  \item there is a natural transformation $\eta_U:U\subseteq F(U)$;
  \item there is a natural transformation $\mu_U:FF(U)\subseteq F(U)$.
\end{itemize}
All the ``coherence'' conditions are trivially satisfied in a partial order
category, since every diagram commute!
What we've just shown is that a monad on $\Pow(X)$ is nothing more than a
closure operator on $\Pow(X)$; and vice and versa.
Similarly, a comonad corresponds to an interior operator.

\medbreak
In categories, a \emph{resolution} for a monad corresponds to factorizing the
functor $F$ as the composition of two adjoint functors. Adjointness $H\vdash G$
between the functors $G:\mathcal{D}->\mathcal{C}$ and
$H:\mathcal{C}->\mathcal{D}$ in a locally small category means that~$\mathcal{C}[A,G(B)] \simeq \mathcal{D}[H(A),B]$ which, in the case of a
partial order category simplifies to ``$U\subseteq G(V)$ iff $H(U)\subseteq
V$'' \ie is exactly the Galois connection~$H\vdash G$.

Category theory tells us that there always is a resolution since we have two
degenerate resolutions: something like $F=\Id·F$ and $F=F·\Id$. The first
one is given by the Eleinberg-Moore category. In the context of order theory,
it amounts to taking: interpolant category to partial orders: if $F$ is a
closure operator on~$X$, take $\mathcal{E}(X,F)$ to be the the collection of
fixed points for $F$, with the ordering inherited from $X$.\footnote{In a
partial order, an algebra for $F$ is just a post fixed point: $x\leq F(x)$. If
$F$ is a closure, then it is also a fixed point for $F$.} We have~$F:X->\mathcal{E}(X,F)$ and~$\Id:\mathcal{E}(X,F)->X$
which are adjoint: $F\vdash\Id$.

 $x \leq \Id(f)$

 \step{=>}{$F$ is monotonic}

 $F(x)\leq F(f)$

 \step{<=>}{$f\in\mathcal{E}(X,F)$, \ie $f$ is a fixed point for $F$}

 $F(x) \leq f$

 \noindent
 and $F(x)\leq f => x\leq f=\Id(f)$ since $x\leq F(x)$.

\smallbreak
The second resolution ($F·\Id$) is obtained via the Kleisli categry. In the
context of order theory, it amounts to taking the order~$(X,\sqsubseteq)$
where~$x\sqsubseteq y$ iff~$F(x)\leq F(y)$.

What is important to us is that those resolutions are trivial and
uninteresting. More abstract nonsense states that there is a category of
resolutions for any given monad; and that those two trivial resolutions are
respectively initial/terminal. What we have done with
Proposition~\ref{prop:interior} and Corollary~\ref{cor:closure} corresponds to
finding a non-trivial \emph{resolution} for the monad corresponding to the
interior/closure operator.
It is even more than a resolution, since the interpolant is itself a complete
and cocomplete category (and one of the functors preserves limits while the
other one preserves colimits; but this is a general fact about adjoints).  The
feeling is that this resolution lies ``exactly in the middle'' between the
initial and terminal resolutions. I don't know if this kind of ``strong''
resolution has been considered in category
theory.\footnote{\label{foot:strongResolution}\ie if $\mathcal{C}$ is a
complete and cocomplete category and $F$ is a monad on $\mathcal{C}$, a
``strong resolution'' is a resolution with a complete and cocomplete
interpolant. Any reference to something similar in the literature would be
most welcome.}

\medbreak
The resolution constructed here is quite different from the Eleinberg-Moore
resolution (even though it uses the fixed-points as a basis). We do not
construct functors from $\Pow(X)$ to $\mathcal{E}(X,F)$ and back; but from
$\Pow(X)$ to $\Pow\big(\mathcal{E}(X,F)\big)$ and back. In particular, as
noted above, the interpolant is complete and cocomplete;\footnote{\ie is a
complete lattice since we deal with partial orders} which is not the case for
the Eleinberg-Moore category: fixed points for an interior are closed under
unions but not under intersections; and conversely for closure operators.

\section{Revisiting Section~\ref{sec:InteriorClosure} in a Constructive Setting} 

\subsection{Impredicative}

In the previous section we used Lemma~\ref{lem:negation} to
generalize results on interior to closures. It is thus natural to ask whether
(1) we can make the original proof constructive; (2) we can avoid using this
lemma and prove the result for closure constructively. I will not into the
details but just provide some hints about that.

\begin{itemize}
  \item Galois connections from Lemma~\ref{lem:Galois} are constructive.
  \item Lemma~\ref{lem:AngelDemonBottom}: the first part is trivial. The
    second part is easy: if $F$ commutes with unions, take $(x,y)\in r$ iff
    $y\in F\{x\}$; if $F$ commutes with intersections, take $(x,y)\in r$ iff
    $(\forall U)\, y\in F(U) => x\in U$; and if $F$ transforms unions into
    intersections, take $(x,y)\in r$ iff $y\in F\{x\}$.
   \item Proposition~\ref{prop:PTfactorize}: the proof is constructive.
   \item Lemma~\ref{lem:AngelDemonBottom} is constructive.
   \item Lemma~\ref{lem:FixInterior} and the corresponding lemma for closure
     operator are constructive.
   \item Proposition~\ref{prop:interior} is constructive.
   \item we can mimic the proof of Proposition~\ref{prop:interior} to obtain a
     resolution of a closure operator as $F=\BottomU{r}·\BottomU{\CONV{r}}$,
     but \emph{not} to obtain a resolution as
     $F=\DemonU{r}·\AngelU{\CONV{r}}$.
   \item I doubt we can constructively obtain a resolution of a closure
     operator as $F=\DemonU{r}·\AngelU{\CONV{r}}$.
   \item all of the lemmas about the size of interpolant are constructive at
     least if we read then as ``if $B$ is a basis for $\Fix(F)$ then we can use
     $B$ as an interpolant''.
\end{itemize}

As a proof of concept, all this (except the last point) has been proved in the
proof assistant COQ.\footnote{proof scripts available from
\texttt{http://iml.univ-mrs.fr/\~{}hyvernat/academics.html}}

\subsection{Predicative}

In a predicative setting like Martin-L\"of type theory (see \cite{ML84}) or
CZF (constructive ZF, see \cite{CZF}) set theory, many of the results are not
provable. The reason being that we do not allow quantification on a
power-set. The main result about resolution would become something like:
\begin{proposition}
  Suppose $\Fix(F)$ (proper type) has a set-indexed basis, then $F$ as a
  resolution as $\AngelU{r}·\DemonU{\CONV{r}}$ (if $F$ is an interior) or as
  $\BottomU{r}·\BottomU{\CONV{r}}$ (if $F$ is a closure).
\\
  If $F$ as a resolution, then $\Fix(F)$ as a set-index basis.
\end{proposition}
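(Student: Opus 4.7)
The plan is to handle the two directions separately: the forward direction constructs the resolution by mimicking the proof of Proposition~\ref{prop:interior} (or its closure analogue) but using the set-indexed basis in place of $\Fix(F)$ itself; the backward direction is essentially a restatement of Lemma~\ref{lem:BaseFromResolution}. The whole point is that the proper-type $\Fix(F)$ never needs to be quantified over: the set-indexed basis gives a legitimate predicative handle on its structure.

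For the forward direction in the interior case, suppose $(U_i)_{i\in I}$ is a basis of $\Fix(F)$ with $I$ a set. I would define the relation $r \subseteq I \times X$ by $(i,x) \in r \iff x \in U_i$ and claim $F = \AngelU{r} \cdot \DemonU{\CONV{r}}$. Unfolding the composition, $x \in \AngelU{r}\cdot\DemonU{\CONV{r}}(U)$ is equivalent to the existence of some $i \in I$ with $x \in U_i$ and $U_i \subseteq U$, i.e.\ to $x \in \bigcup\{U_i \mid U_i \subseteq U\}$. By the basis property applied to $F(U) \in \Fix(F)$ together with Lemma~\ref{lem:FixInterior}, this union is exactly $F(U)$. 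Note that this computation only quantifies over the set $I$, never over $\Fix(F)$, so it goes through in CZF or Martin-L\"of type theory.

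For the forward direction in the closure case, I would dualise: take a basis $(U_i)_{i \in I}$ of the inf-lattice $\Fix(F)$ and define $r \subseteq I \times X$ by $(i,x) \in r \iff x \notin U_i$ (or whatever matches the De~Morgan-style computation used for the impredicative analogue). The resulting resolution must be of the $\BottomU{r}\cdot\BottomU{\CONV{r}}$ flavour, not the $\DemonU{r}\cdot\AngelU{\CONV{r}}$ flavour, in accordance with the earlier remark that the latter is not expected to be constructively available. Here the main obstacle is bookkeeping: getting the polarities right so that the proof is genuinely the dual of the interior case and does not secretly smuggle in a quantification over $\Fix(F)$.

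For the backward direction, suppose $(Y,r)$ is a resolution of $F$ with $Y$ a set. Then Lemma~\ref{lem:BaseFromResolution} already produces a basis of $\Fix(F)$ indexed by $Y$, namely $U_y \mathrel{:=} \AngelU{r}\{y\}$. The proof of that lemma is built from manipulations (Galois connection, commuting $\AngelU{r}$ with unions, the identities of Lemma~\ref{lem:interior}) that are constructive and do not require $\Pow(X)$-quantification, so it transfers verbatim to the predicative setting. The family $(U_y)_{y\in Y}$ is set-indexed because $Y$ is, which is precisely what was required.
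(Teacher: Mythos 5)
Your overall strategy is exactly the one the paper intends (the paper itself only sketches this proposition, by pointing back at the constructive proofs of Proposition~\ref{prop:interior}, its corollary, and Lemma~\ref{lem:BaseFromResolution}): replace $\Fix(F)$ by the set-indexed basis in the forward direction, and read Lemma~\ref{lem:BaseFromResolution} positively in the backward direction. The interior case is fine as you wrote it.

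The one genuine misstep is the relation you propose for the closure case. Defining $(i,x)\in r$ as $x\notin U_i$ does not work: with that choice $\BottomU{\CONV{r}}(U)$ picks out the indices $i$ with $U\cap U_i=\emptyset$, and $\BottomU{r}$ of that set has nothing to do with $F(U)$. Worse, reaching for a De~Morgan-style complementation is exactly the move that the constructive/predicative setting forbids, and it is the reason the paper insists the closure resolution comes out as $\BottomU{r}·\BottomU{\CONV{r}}$ \emph{directly} rather than via $\lnot·F·\lnot$. The correct relation is the same membership relation as in the interior case, $(i,x)\in r \iff x\in U_i$, where now $(U_i)_{i\in I}$ is a basis of the \emph{inf}-lattice $\Fix(F)$, meaning every fixed point is the intersection of the basis elements above it. Then $\BottomU{\CONV{r}}(U)=\{i \mid U\subseteq U_i\}$ and $\BottomU{r}$ of that is $\bigcap\{U_i \mid U\subseteq U_i\}$, which equals $F(U)$ because $U\subseteq U_i$ iff $F(U)\subseteq U_i$ (monotonicity plus $U_i$ being a fixed point) and $F(U)=\bigcap\{U_i\mid F(U)\subseteq U_i\}$ by the basis property. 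No negation appears anywhere. A smaller quibble: in the backward direction the proof of Lemma~\ref{lem:BaseFromResolution} does not transfer \emph{verbatim} to the closure case; you need its dual, taking $U_y=\BottomU{r}\{y\}$, using $\BottomU{r}·\BottomU{\CONV{r}}·\BottomU{r}=\BottomU{r}$ from Lemma~\ref{lem:interior} and the fact that $\BottomU{r}$ turns unions into intersections, so that every fixed point is $\bigcap\{U_y\mid U\subseteq U_y\}$. That dualisation is routine but it is a different computation, not the same one.
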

Note that having a set-indexed basis is equivalent to being ``set presented''
in the terminology of P. Aczel (\cite{CZF,Aczel}). Note that in the case of an
interior operator, this implies that the predicate transformer is
``set-based'' (in the sense that it has a factorization as in
Proposition~\ref{prop:PTfactorize}, where the interpolant $X'$ is a set).
It doesn't seem that the existence of a resolution for a closure operator
implies that the original predicate transformer is itself set-presented.


\section*{Conclusion} 

Nothing revolutionary has really been done, but the statement of
Propositions~\ref{prop:PTfactorize} and~\ref{prop:interior} is, in an abstract
setting, quite neat.
However, as the proofs show, this is
mostly abstract nonsense. The best example is probably
Proposition~\ref{prop:PTfactorize}, where $y\in F(U)$ is factorized as ``there
is a $V$ such that $s\in F(V)$ and $V\subseteq U$''. The proof of
Proposition~\ref{prop:interior} is slightly subtler, but is hardly
interesting. In the end, the most interesting and informative thing is
probably Lemma~\ref{lem:BaseFromResolution}, in its ``positive'' version: if
$F$ has $Y$ as an interpolant, then $\Fix(F)$ has a basis indexed by $Y$, which
is hardly a breakthrough in mathematics...

I do nevertheless hope that it might interest some people, since while
Proposition~\ref{prop:PTfactorize} is known to many (especially the refinement
calculus people), it seems that Proposition~\ref{prop:interior} isn't stated
anywhere. I also hope the link between monad and closure operator (together
with the Eleinberg-Moore category being the partial order of fixed points)
will gain in popularity, as I see it as a much better way of seeing monads, at
least as far as intuition is concerned.

\medbreak
A final word about the motivation for this: the starting point was the
question about whether it is possible to represent any ``basic topology'' (see
the forthcoming \cite{BP3}) as the formal side of a basic pair,
impredicatively speaking.  A basic topology is a structure
$(X,\mathcal{A},\mathcal{J})$ where $X$ is a set, and $\mathcal{A}$ and
$\mathcal{J}$ are closure and interior operators on $\Pow(X)$ such that
$\mathcal{A}(U) \meets \mathcal{J}(V) => U
\meets(\mathcal{J}(V)$.\footnote{$U)\meets V$ is the constructive version of
$U\cap V\neq \emptyset$.} The answer is obviously \emph{no} since the
$\mathcal{A}$ and $\mathcal{J}$ arising from a formal pair are classically
dual (\ie $\mathcal{A}·\lnot = \lnot·\mathcal{J}$) and there are basic
topologies which are provably not dual. The question then turned into: ``can
any interior operator be written as the formal interior of a basic pair?'' and
similar for closure operators. The answers are \emph{yes}
(Proposition~\ref{prop:interior}) and \emph{I don't know} (the constructive
resolution of a closure is of the form $\BottomU{r}·\BottomU{\CONV{r}}$, and
not of the form $\DemonU{r}·\AngelU{\CONV{r}}$.)


\bigbreak
\bibliographystyle{splncs}
\bibliography{resolutions}

\begin{thebibliography}{1}

\bibitem{RC}
Back, R.J., von Wright, J.:
\newblock Refinement Calculus: a systematic introduction.
\newblock Graduate texts in computer science. Springer-Verlag, New York (1998)

\bibitem{Birkhoff}
Birkhoff, G.:
\newblock Lattice theory.
\newblock Third edition. American Mathematical Society Colloquium Publications,
  Vol. XXV. American Mathematical Society, Providence, R.I. (1967)

\bibitem{newalg}
Gardiner, P.H.B., Martin, C.E., de~Moor, O.:
\newblock An algebraic construction of predicate transformers.
\newblock Science of Computer Programming \textbf{22} (1994)  21--44

\bibitem{LambekScott}
Lambek, J., Scott, P.J.:
\newblock Introduction to higher order categorical logic. Volume~7 of Cambridge
  Studies in Advanced Mathematics.
\newblock Cambridge University Press, Cambridge (1988) Reprint of the 1986
  original.

\bibitem{ML84}
Martin-L{\"o}f, P.:
\newblock Intuitionistic type theory.
\newblock Bibliopolis, Naples (1984) Notes by Giovanni Sambin.

\bibitem{CZF}
Aczel, P., Rathjen, M.:
\newblock Notes on constructive set theory.
\newblock Technical Report~40, Mittag-Leffler Institut, Stockholm (2000/2001)

\bibitem{Aczel}
Aczel, P.:
\newblock An introduction to inductive definitions.
\newblock In: Handbook of mathematical logic, Amsterdam, North-Holland
  Publishing Co. (1977)  739--782 Edited by Jon Barwise, With the cooperation
  of H. Jerome Keisler, Kenneth Kunen, Yiannis Nikolas Moschovakis and Anne
  Sjerp Troelstra, Studies in Logic and the Foundations of Mathematics, Vol.
  90.

\bibitem{BP3}
Sambin, G.:
\newblock Basic topologies, formal topologies, formal spaces (the {B}asic
  {P}icture, {III}) (2002) Draft.

\end{thebibliography}

\end{document}